\newtheorem{observation}[theorem]{Observation}
\newtheorem{claim2}[theorem]{Claim}
\begin{document}

\title{Planar projections of graphs}
\author{N.R. Aravind\inst{1}\orcidID{0000-0002-6590-7952} \and Udit Maniyar\inst{1}}
\institute{$^1$Indian Institute of Technology Hyderabad}
\date{}

\maketitle

\begin{abstract}
We introduce and study a new graph representation where vertices are embedded in three or more dimensions, and in which the edges are drawn on the projections onto the axis-parallel planes.
We show that the complete graph on $n$ vertices has a representation in $\lceil \sqrt{n/2}+1 \rceil$ planes. In 3 dimensions, we show that there exist graphs with $6n-15$ edges that can be projected onto two orthogonal planes, and that this is best possible. Finally,
we obtain bounds in terms of parameters such as geometric thickness and linear arboricity. Using such a bound, we show that every graph of maximum degree 5 has a plane-projectable representation in 3 dimensions.

\keywords{Graph drawing \and planarity \and thickness \and planar projections.}
\end{abstract}

\section{Introduction}

In this paper, we consider embeddings of graphs where the vertices are mapped to points in $\mathbb{R}^d$, for $d \geq 3$, and the edges are represented by line-segments on the $\dbinom{d}{2}$ axis-parallel planes. For example, a 3-dimensional network may be visualized by placing it inside a cube and drawing the edges on the walls of the cube by projecting the points.

One motivation is the connection to two classical parameters, thickness and geometric thickness.
The thickness of a graph $G$, is the smallest number of planar subgraphs into which the edges of $G$ can be decomposed. This was introduced by Tutte in \cite{T63}; see also \cite{MOS98} for a survey of thickness.
Geometric thickness adds the restriction that all the subgraphs must be embedded simultaneously, that is, with a common embedding of the vertices. This was studied in \cite{DEH00} for complete graphs.
The connection between geometric thickness and parameters such as maximum degree and tree-width has been studied in various papers: \cite{DEK04},\cite{BMW06},\cite{D11}.
While using the standard co-ordinate planes in high dimensions is more restrictive than thickness, it appears to be less so than geometric thickness (Section 3).

Book embeddings, defined by Ollmann in \cite{O73}, are restrictinos of geometric drawings in which the vertices are in convex position.
The book thickness of $G$ is the smallest number of subgraphs that cover all the edges of $G$ in such a drawing. This is also known as stack number, and is studied in \cite{DW05}.
Also see \cite{DW04} for a survey. More generally, a survey on simultaneous embedding of graphs may be found in \cite{BKR13}.

In \cite{HSV99}, the authors showed that $n$-vertex graphs of geometric thickness 2 can have at most $6n-18$ edges. Such graphs can also be represented as projections in two orthogonal planes; orthogonal planes appear to allow a greater degree of freedom, as we give a construction of graphs with $6n-15$ edges.
We also note that a plane-projectable construction with $6n-17$ edges was given in \cite{M16}.



\subsection{Preliminaries:}

For a point $q$ in $\mathbb{R}^d$, we denote by $\pi_{i,j}(q)$, the projection of $q$ on the plane $\{x\in\mathbb{R}^d\mid x_i=x_j=0\}$ formed by the $i$th  and $j$th co-ordinate axes.

\begin{definition}
Given a graph $G=(V,E)$, we say that an injective map $\pi: V \to \mathbb{R}^d$ is a {\bf plane-projecting map} of $G$ if there exists a decomposition 
$E=\cup_{1 \leq i<j \leq d} E_{i,j}$ such that the projection $\pi_{i,j}$ is injective and induces a straight-line planar embedding of the subgraph $(V,E_{i,j})$.
\end{definition}

We define the {\bf plane-projecting dimension} of a graph $G$ to be the smallest integer $d$ for which a plane-projecting map in $\mathbb{R}^d$ exists for $G$.
We denote this by $pdim(G)$.

If $pdim(G) \leq d$, we shall say that $G$ is {\bf $d$-dimensionally projectable} or {\bf plane-projectable in $d$-dimensions}.

We note the following connection between the plane-projecting dimension and the two thickness parameters of a graph.

\begin{observation}\label{pdim-vs-thickness}
Let $G$ have thickness $\theta(G)=r$ and geometric thickness $\bar{\theta(G)}=s \geq r$.
Then we have:

(i) $\dfrac{\sqrt{8r+1}+1}{2} \leq pdim(G) \leq 2r$.

(ii) $pdim(G) \leq 2 \lceil \sqrt{s} \rceil$.
\end{observation}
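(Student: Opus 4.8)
The plan is to establish the three inequalities separately; all are short, and the only point needing care is that a plane-projecting map must have \emph{every} coordinate-plane projection $\pi_{i,j}$ injective on $V$, not just those carrying edges — this will be arranged uniformly by a generic rotation. For the upper bound $pdim(G)\le 2r$ in (i), fix a decomposition $E=E_1\cup\cdots\cup E_r$ into $r$ planar subgraphs and, by Fáry's theorem, a straight-line planar embedding $f_k\colon V\to\mathbb{R}^2$ of $(V,E_k)$ for each $k$; after rotating each $f_k$ by a generic angle we may assume both coordinate functions of each $f_k$ are injective on $V$. Define $\pi\colon V\to\mathbb{R}^{2r}$ by placing $f_k(v)$ in coordinates $2k-1,2k$ of $\pi(v)$, and set $E_{2k-1,2k}:=E_k$ and $E_{i,j}:=\emptyset$ otherwise. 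Then $\pi_{2k-1,2k}=f_k$ induces the required embedding of $(V,E_k)$, an empty graph is trivially straight-line planar, and since a single coordinate of $\pi$ already separates all vertices, every $\pi_{i,j}$ (and $\pi$ itself) is injective; hence $\pi$ is a plane-projecting map.

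For the lower bound in (i): if $\pi$ is a plane-projecting map of $G$ in $\mathbb{R}^d$ with $d=pdim(G)$, the witnessing decomposition $E=\bigcup_{i<j}E_{i,j}$ exhibits $E$ as a union of $\binom d2$ planar graphs, so $r=\theta(G)\le\binom d2$. Solving $d(d-1)/2\ge r$ for the positive root gives $d\ge\frac{\sqrt{8r+1}+1}{2}$, as claimed; this direction is immediate.

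For (ii): let $g\colon V\to\mathbb{R}^2$ be the common vertex placement of a geometric-thickness-$s$ drawing, with colour classes $F_1,\dots,F_s$, each crossing-free in the straight-line drawing given by $g$; after a generic rotation assume both coordinate functions $g_1,g_2$ of $g$ are injective. Put $t:=\lceil\sqrt s\rceil$, $d:=2t$, and split the coordinates of $\mathbb{R}^{2t}$ into blocks $A=\{1,\dots,t\}$ and $B=\{t+1,\dots,2t\}$; there are $t^2\ge s$ planes with one axis in each block. Define $\pi(v)$ by $\pi(v)_a=g_1(v)+a$ for $a\in A$ and $\pi(v)_{t+b}=g_2(v)+b$ for $1\le b\le t$. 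For $a\in A$ and $t+b\in B$ the projection $\pi_{a,t+b}$ sends $v$ to $(g_1(v)+a,\,g_2(v)+b)$, a fixed translate of the drawing $g$; assigning $F_1,\dots,F_s$ to $s$ distinct such planes (and $\emptyset$ to the remaining ones) makes every $\pi_{i,j}$ induce a straight-line planar embedding of $(V,E_{i,j})$, since a translate of a crossing-free straight-line drawing is again crossing-free. As in (i), every coordinate of $\pi$ separates all vertices, so all $\pi_{i,j}$ and $\pi$ are injective, and $pdim(G)\le 2t=2\lceil\sqrt s\rceil$.

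There is no genuine obstacle; the one thing to watch is the bookkeeping ensuring all $\binom d2$ projections are injective, handled once by the generic-rotation step. I also expect the bound in (ii) to be improvable: replacing the translates with $d$ pairwise non-parallel linear functionals on $\mathbb{R}^2$, every one of the $\binom d2$ planes can be made to carry a linearly transformed — hence still crossing-free — copy of $g$, giving the sharper $pdim(G)\le\lceil(\sqrt{8s+1}+1)/2\rceil$; but the block construction above is exactly what is needed for the stated inequality.
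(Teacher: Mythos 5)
Your proof is correct and follows essentially the same route as the paper's: the lower bound in (i) from $r \le \binom{d}{2}$, the bound $pdim(G)\le 2r$ by juxtaposing the $r$ straight-line planar drawings in disjoint coordinate pairs, and (ii) by repeating the two coordinates of a geometric-thickness drawing $\lceil\sqrt{s}\rceil$ times each; your generic rotations and translation offsets merely make explicit the injectivity of all $\binom{d}{2}$ projections, which the paper leaves implicit. Your closing remark deserves emphasis: taking the $d$ coordinates of $\pi(v)$ to be pairwise linearly independent linear functionals applied to $g(v)$ makes every projection $\pi_{i,j}$ an invertible linear image of the drawing $g$, hence crossing-free on each colour class, which yields $pdim(G)\le\min\{d:\binom{d}{2}\ge s\}$ and would answer Open Problem 5 of this paper affirmatively.
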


\begin{proof}
The first inequality in (i) follows from the observation that 
$r \leq \dbinom{pdim(G)}{2}$; the second inequality is easy to see.
For (ii), we let $k=\lceil \sqrt{s} \rceil$.
For a vertex $v$, let $(a,b)$ be its position in an optimal geometric thickness representation of $G$.
Then the map $f(v)=(a,a,\ldots,a,b,b, \ldots, b)$ (with number of $a$'s and $b$'s each equal to $k$), is a plane-projecting map, with the edge sets $\{E_{i,j}:1 \leq i \leq k,k+1 \leq j \leq 2k\}$ corresponding to the subgraphs of the geometric thickness representation, and $E_{i,j}$ drawn on the plane with $x_i=x_j=0$.
\qed
\end{proof}

In \cite{D11}, the author obtained a bound of $O(\log n)$ on the geometric thickness of graphs with arboricity two; thus as a corollary, we obtain a bound of $O(\sqrt{\log n})$ on the plane-projecting dimension of such graphs.

\subsection{Our Results:}

In Section 2, we obtain an upper bound of $\sqrt{n/2}+O(1)$ on the plane-projecting dimension of $K_n$.

In Section 3, we give a construction of graphs having $n$ vertices and $6n-15$ edges that can be projected on two orthogonal planes, and further show that this is tight. We also obtain an upper bound on the maximum number of edges of a $n$-vertex graph that is plane-projectable in 3 dimensions.

In Section 4, we show that every graph of maximum degree five is plane- projectable in three dimensions, by obtaining an upper bound in terms of the linear arboricity of $G$ (which is the minimum number of linear forests that partition the edges of $G$).
We also obtain a general upper bound of $\Delta(G)\left(\dfrac{1}{2}+o(1)\right)$ on $pdim(G)$.
Note that an upper bound of $\Delta(G)+1$ follows from Observation \ref{pdim-vs-thickness} and a result of \cite{H91}, which states that the thickness of a graph of maximum degree $\Delta$ is at most $\lceil \dfrac{\Delta}{2} \rceil$.

\section{Plane-projecting dimension of complete graphs}

In the paper \cite{DEH00}, the authors show that the geometric thickness of $K_n$ is at most $\lceil n/4 \rceil$. 
Combining this with Observation \ref{pdim-vs-thickness}, we get $pdim(K_n) \leq \lceil \sqrt{n} \rceil$.

The thickness of $K_n$ is known to be 1 for $1 \leq n \leq 4$, 2 for $5 \leq n \leq 8$, 3 for $9 \leq n \leq 10$,
and $\lceil \dfrac{n+2}{6} \rceil$ for $n>10$.
Thus, for $n>10$, we get $pdim(K_n) \geq \sqrt{n/3} $.

By using the construction of \cite{DEH00} in a more direct way, we obtain the following improved upper bound.
\begin{theorem}\label{pdim-complete}
$pdim(K_n) \leq \lceil \dfrac{\sqrt{2n+7}+1}{2} \rceil$.
\end{theorem}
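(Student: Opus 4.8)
The plan is to set $d=\lceil \tfrac{\sqrt{2n+7}+1}{2}\rceil$ and to construct a plane-projecting map of $K_n$ into $\mathbb{R}^d$ by hand, the point being to reuse \emph{all} $\binom{d}{2}$ coordinate planes rather than only the $k^2$ of them exploited in the proof of Observation~\ref{pdim-vs-thickness}(ii); this is exactly what upgrades the bound from $\approx\sqrt{n}$ to $\approx\sqrt{n/2}$. The input is the construction of \cite{DEH00} witnessing $\bar{\theta}(K_n)\le\lceil n/4\rceil$: positions $p_1,\dots,p_n\in\mathbb{R}^2$ together with a partition of $E(K_n)$ into at most $\lceil n/4\rceil$ subgraphs, each drawn without crossings by straight segments using the common positions $p_v$. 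A direct computation shows that this value of $d$ satisfies $\binom{d}{2}=\tfrac{d(d-1)}{2}\ge\tfrac{n+3}{4}\ge\lceil n/4\rceil$, so there are at least as many coordinate planes in $\mathbb{R}^d$ as layers; after padding the decomposition with empty graphs, I may assume it consists of exactly $m:=\binom{d}{2}$ subgraphs $G_{i,j}$, indexed by the pairs $1\le i<j\le d$.

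The key point is to see what such a plane-projecting map must look like. If the projection onto the $(i,j)$-plane is to induce a straight-line planar drawing of $G_{i,j}$, it is natural to ask that $v\mapsto(f(v)_i,f(v)_j)$ be an invertible affine image of $v\mapsto p_v$; but the coordinate $f(v)_i$ is shared by every pair containing $i$, so this forces $f(v)_i$ to be a single affine function of $p_v$, independent of $j$. Accordingly, I would pick affine maps $L_1,\dots,L_d:\mathbb{R}^2\to\mathbb{R}$ whose linear parts are pairwise linearly independent (for instance $L_t(x,y)=x\cos\theta_t+y\sin\theta_t$ for $d$ distinct angles $\theta_t$), and define $f(v)=(L_1(p_v),\dots,L_d(p_v))$. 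Then $\pi_{i,j}(f(v))=(L_i(p_v),L_j(p_v))$, and since $(L_i,L_j):\mathbb{R}^2\to\mathbb{R}^2$ is an invertible affine transformation, this projection is injective on $V$ and carries the planar straight-line drawing of $G_{i,j}$ on the positions $p_v$ to another planar straight-line drawing. Taking $E_{i,j}=E(G_{i,j})$ then exhibits $f$ as a plane-projecting map, and injectivity of $f$ itself follows from that of $\pi_{1,2}\circ f$.

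I expect the only genuine subtlety to be the observation in the second paragraph: that the shared coordinates force each one to be a \emph{global} affine function of the two-dimensional position, and that this constraint is nevertheless harmless as long as the $L_t$ are chosen in general position. The remaining ingredients are routine — invertible affine maps preserve distinctness of vertices and absence of crossings in a straight-line drawing; empty layers can be added freely; and the inequality $\binom{d}{2}\ge(n+3)/4$ is a one-line calculation since $\tfrac{1}{2}\cdot\tfrac{\sqrt{2n+7}+1}{2}\cdot\tfrac{\sqrt{2n+7}-1}{2}=\tfrac{n+3}{4}$. One should also sanity-check the small cases, where the stated bound must be compared against the baseline $pdim\ge 3$ imposed by the definition.
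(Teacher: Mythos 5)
Your proof is correct, and it reaches the stated bound by a genuinely different and more modular route than the paper. The paper re-does the construction of the geometric-thickness bound for $K_n$ inside $\mathbb{R}^d$: it first places $n$ points on a trigonometric curve so that every coordinate-plane projection is a convex point set on an ellipse with the same cyclic order (Lemma~\ref{convex-projection}), and then re-runs the path-decomposition and ``far-enough diametrical pair'' continuity argument separately in each plane; that edge-routing part is only sketched. You instead use the geometric thickness result as a black box --- a single point set $p_v\in\mathbb{R}^2$ with $\lceil n/4\rceil$ crossing-free layers --- and observe that composing with $d$ pairwise-independent linear functionals makes every projection $\pi_{i,j}\circ f=(L_i,L_j)\circ p$ an invertible affine image of $p$, so each layer can be re-drawn verbatim in any coordinate plane; the arithmetic $\binom{d}{2}\geq\frac{n+3}{4}\geq\lceil n/4\rceil$ for $d=\lceil\frac{\sqrt{2n+7}+1}{2}\rceil$ is exactly the count the paper needs as well. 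In fact your construction is the paper's point set in disguise: the curve $P_i=(\cos(a_i+ib),\cos(a_i+(i+1)b),\ldots)$ is precisely $(L_1(q_i),\ldots,L_d(q_i))$ for $q_i$ on a circle and $L_t$ the functional $(x,y)\mapsto x\cos((t-1)b)-y\sin((t-1)b)$, so your argument can be read as a rigorous replacement for the sketched routing step. More importantly, your argument never uses anything specific to $K_n$ and therefore proves the general transfer statement that $pdim(G)$ is at most the smallest $d$ with $\binom{d}{2}\geq\bar{\theta}(G)$; this strengthens Observation~\ref{pdim-vs-thickness}(ii) from $2\lceil\sqrt{s}\rceil$ to roughly $\sqrt{2s}$ and answers Open Problem 5 of the paper affirmatively, so you should consider stating it at that level of generality.
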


To prove Theorem \ref{pdim-complete}, we shall use the following lemma, which we first state and prove.

\begin{lemma}\label{convex-projection}
For every natural number $d \geq 2$ and every natural number $n$, there exist $n$ points in $\mathbb{R}^d$ such that for every $1\leq i<j \leq d$, the projections of these points to the $(i,j)$-plane are in convex position, and in the same order on the convex hull.
\end{lemma}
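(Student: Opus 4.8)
The plan is to place the points on the moment curve, restricted to the positive reals. Concretely, I would pick any reals $0 < t_1 < t_2 < \cdots < t_n$ (for instance $t_k = k$) and set $p_k = (t_k, t_k^2, \ldots, t_k^d) \in \mathbb{R}^d$ for $1 \le k \le n$. The claim is that this single configuration works for every pair of axes simultaneously, with the common labeling $1, 2, \ldots, n$.

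Fix $1 \le i < j \le d$. The projection of $p_k$ to the plane spanned by the $i$th and $j$th axes is the point $(t_k^i, t_k^j)$. Writing $x = t^i$, we have $t^j = x^{j/i}$, so all $n$ projected points lie on the graph of $g(x) = x^{\alpha}$ with $\alpha = j/i > 1$, restricted to $x > 0$. Since $g''(x) = \alpha(\alpha-1)x^{\alpha-2} > 0$ on $(0,\infty)$, the function $g$ is strictly convex there, and hence any finite set of points on its graph is in strictly convex position (no projected point lies in the convex hull of the others). This gives the first conclusion.

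For the ``same order'' statement, note that $t \mapsto t^i$ is strictly increasing on $(0,\infty)$, so the projected points $(t_k^i, t_k^j)$ occur along this convex arc in increasing order of $k$; consequently they appear in the cyclic order $1, 2, \ldots, n$ on the boundary of their convex hull. Since the labeling $1, \ldots, n$ does not depend on the pair $(i,j)$, all $\binom{d}{2}$ projections realize the same cyclic order. Injectivity of each $\pi_{i,j}$ and of the map $k \mapsto p_k$ also follows, since every coordinate is strictly monotone in $k$.

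The only point that needs care is exactly this simultaneity: strict convex position together with one consistent cyclic order must hold across all $\binom{d}{2}$ coordinate planes at once, not merely one at a time. Describing each projection as (a reparametrization of) the graph of the strictly convex, strictly increasing function $x \mapsto x^{j/i}$ handles both requirements together, so I do not anticipate a real obstacle; this is essentially the classical fact that the moment curve meets coordinate planes in convex arcs, with the restriction to $t > 0$ added only so that $j/i > 1$ forces genuine strict convexity rather than an inflection.
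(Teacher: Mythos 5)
Your proof is correct. It follows the same overall strategy as the paper -- place all $n$ points on a single curve in $\mathbb{R}^d$ every one of whose coordinate-plane projections is a strictly convex arc -- but with a different curve: you use the polynomial moment curve $t\mapsto(t,t^2,\ldots,t^d)$ restricted to $t>0$, whereas the paper uses a trigonometric moment curve $t\mapsto(\cos(t+b),\cos(t+2b),\ldots)$ so that each projection lands on an ellipse. Your version has two advantages worth noting. First, it is fully explicit and self-contained: strict convexity of $x\mapsto x^{j/i}$ on $(0,\infty)$ for $j>i$ immediately gives that every projected point is a hull vertex, with no need to choose ``suitable'' parameters. Second, you actually verify the ``same order on the convex hull'' clause -- the points appear along each convex arc in increasing order of $t_k$ because $t\mapsto t^i$ is increasing, so all $\binom{d}{2}$ projections realize one common cyclic order -- whereas the paper's one-line ellipse argument only addresses convex position and leaves the consistency of the cyclic order (and the nondegeneracy of the ellipses, i.e.\ that they do not collapse to segments) implicit. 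The restriction to $t>0$ is exactly the right care point, since on all of $\mathbb{R}$ the graph of $x\mapsto x^{j/i}$ need not be convex.
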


\begin{proof}[of Lemma \ref{convex-projection}]
We consider the point-set 
$P_i=(\cos (a_i+ib),\cos (a_i+(i+1)b),\ldots,\cos(a_i+(i+d-1)b))$ for some suitable $b$ and $a_i$s. For given $i,j \in \{1,2,\ldots,d\}$, the projection of these points in the $(i,j)$ plane lie on an ellipse.
\qed
\end{proof}

We now prove Theorem \ref{pdim-complete}.

\begin{proof}[of Theorem \ref{pdim-complete}]
Let $d$ be such that $\dbinom{d}{2} \geq \lceil \dfrac{n}{4} \rceil$.
We assume that $n=2k$, where $k$ is even, and find sets $S,T$ of $n/2$ points each in $\mathbb{R}^d$ such that
the projections of $S$ and $T$ to two given axis-parallel planes lie on an ellipse, with the ellipse corresponding to $S$ always contained inside and congruent to the ellipse corresponding to $T$,
as shown in Figure 1 (right).

\begin{figure}
\centering
\begin{tikzpicture}[scale=0.6,rotate=45,auto=center,every node/.style={circle,inner sep=1.5pt}]
\draw[black](0,0) ellipse (4 and 2);
\node[fill=black] (v1) at (-4,0){};
\node[fill=black] (v5) at (4,0){};
\node[fill=black] (v3) at (0,-2){};
\node[fill=black] (v7) at (0,2){};
\node[fill=black] (v2) at (-2,-1.732){};
\node[fill=black] (v8) at (-2,1.732){};
\node[fill=black] (v4) at (2,-1.732){};
\node[fill=black] (v6) at (2,1.732){};

\draw (v1)--(v5)--(v2)--(v4)--(v3);
\draw (v1)--(v6)--(v8)--(v7);

\node[fill=none] at (-4.5,0){$v_1$};
\node[fill=none] at (4.5,0){$v_5$};
\node[fill=none] at (-2,-2.2){$v_2$};
\node[fill=none] at (0,-2.5){$v_3$};
\node[fill=none] at (2,-2.2){$v_4$};
\node[fill=none] at (2,2.2){$v_6$};
\node[fill=none] at (0,2.5){$v_7$};
\node[fill=none] at (-2,2.2){$v_8$};

\draw[black](5,-5) ellipse (4 and 2);
\node[fill=black] (w5) at (9,-5){};
\node[fill=none] at (10,-5){$w_5$};
\node[fill=black] (w1) at (1,-5){};
\node[fill=none] at (0.5,-5){$w_1$};


\draw[black](5,-4.5) ellipse (0.5 and 0.25);
\node[fill=none] (x1) at (5,-4.2){};
\node[fill=none] (x2) at (5,-4.75){};
\node[fill=none] (x3) at (4.6,-4.5){};
\node[fill=none] (x4) at (5.4,-4.5){};

\draw (w1)--(w5);
\draw (w1)--(x1)--(w5);
\draw (w1)--(x2)--(w5);
\draw (w1)--(x3);
\draw (w5)--(x4);

\end{tikzpicture}
\caption{Left: Path in $S$/$T$; Right: Edges between diametrically opposite vertices of $T$ and the vertices of $S$}
\end{figure}
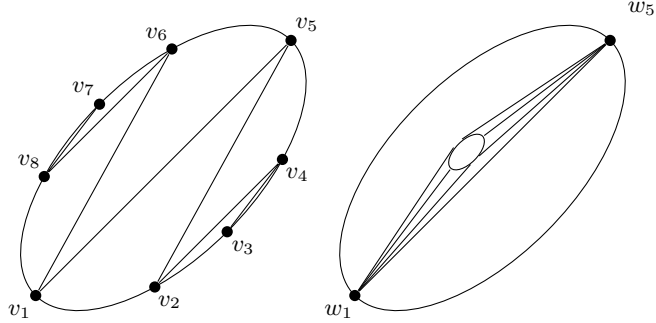

The drawing of edges is now the same as in \cite{DEH00}, which we explain for the sake of completeness.

We decompose the complete graph on each of $S,T$ into $k/2$ disjoint paths and draw their edges in $k/2$ planes such that each path contains exactly one pair of diametrically opposite vertices that are adjacent. Here, we use the phrase ``diametrically opposite" for a pair of vertices if the points corresponding to them have exactly $k/2-1$ other points between them on the convex hull.
This is illustrated in Figure 1 (a), where $v_1,v_5$ are the diametrically opposite pair which are adjacent. Other diametrically opposite pairs are $\{v_2,v_6\},\{v_3,v_7\}$ etc, each of which shall be adjacent in a different plane.

Finally, we add edges between every vertex of $S$ and the diametrically opposite pair of $T$. That this can be done is shown in \cite{DEH00}, by showing that there exist a set of parallel lines each passing through one point in $S$, and arguing by continuity that if the diametrically opposite pair is far enough, they may be joined to the points of $S$ without intersections. This is illustrated in Figure 1 (b).
\qed
\end{proof}

\section{Plane-projectable graphs in $\mathbb{R}^3$}

In this section, we focus on $\mathbb{R}^3$, and ask the following two extremal questions.

Q1. What is the maximum number of edges of a $n$-vertex graph with plane-projecting dimension three?

Q2. What is the maximum number of edges of a $n$-vertex graph whose edges can be projected into two orthogonal planes in $\mathbb{R}^3$?

We shall answer Question 1 partially by giving an upper bound of $9n-24$, and Question 2 completely, by giving matching upper and lower bounds.

As mentioned earlier, any graph with geometric thickness two can be projected in two of the co-ordinate planes.
In \cite{HSV99}, it was shown that a $n$-vertex graph of geometric thickness two can have at most $6n-18$ edges and that $6n-20$ edges
is achievable. This was improved in \cite{DGM16}, where it was shown that for every $n \geq 9$, $6n-19$ edges is achievable.

By modifying their construction, we show the following:

\begin{theorem}
For every $n \geq 14$, there exists a graph $G_n$ on $6n-15$ edges with an embedding in $\mathbb{R}^3$ such that the restriction to two of the planes form planar straight-line embeddings of two graphs whose edge-union is equal to $G_n$.
\end{theorem}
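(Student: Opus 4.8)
The plan is to build $G_n$ together with its embedding explicitly, by exhibiting two straight-line planar triangulations $T_1,T_2$ on a common vertex set $\{v_1,\dots ,v_n\}$ that share exactly three edges, and realising them in $\mathbb{R}^3$ so that the $T_1$-drawing lies in the $xz$-plane, the $T_2$-drawing in the $yz$-plane, and the two drawings agree on the $z$-coordinate of every vertex. Since a triangulation on $n$ vertices has $3n-6$ edges, $G_n:=(V,T_1\cup T_2)$ then has $2(3n-6)-3=6n-15$ edges, and the restrictions of the embedding to the two planes are precisely the planar drawings of $T_1$ and $T_2$, whose edge-union is $G_n$ (the three common edges get drawn on both planes, which is harmless).

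The point is that a common embedding in $\mathbb{R}^3$ couples the two planar drawings only through the $z$-coordinate: if $v_i\mapsto (x_i,y_i,z_i)$, the $xz$-drawing uses the planar point set $\{(x_i,z_i)\}$ and the $yz$-drawing uses $\{(y_i,z_i)\}$, so only the heights $z_i$ are shared while the $x_i$ and the $y_i$ are free. I would therefore fix a height order $z_1<z_2<\cdots <z_n$ and then choose the two point sets and their triangulations separately. Three edges are unavoidably common: $v_1$ (the lowest vertex) lies on the convex hull of both point sets; an interior vertex of a straight-line triangulation has a neighbour of strictly smaller height, and the only vertex below $v_2$ is $v_1$, so $v_1v_2\in T_1\cap T_2$; symmetrically $v_{n-1}v_n\in T_1\cap T_2$; and since a straight-line triangulation has a triangular outer face through its three hull vertices, which here must include $v_1$ and $v_n$, the edge $v_1v_n$ is an outer edge of both. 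The substance of the construction is arranging that these are the \emph{only} common edges while keeping one height order consistent with straight-line realisations of both triangulations. A convenient starting point is the geometric-thickness-$2$ construction of \cite{DGM16}, which supplies two edge-disjoint near-triangulation drawings on a \emph{common} $n$-point set ($6n-19$ edges in total, all coordinates coinciding between the two drawings); I would split its point set into $P_1$ and $P_2$ while retaining the heights, slide the $x$- and $y$-coordinates apart so that the few pairs of edges that the coincident-embedding restriction forced to cross become uncrossed (a different far-away ``apex'' vertex playing the role of the third hull vertex in each plane), and then add the missing edges back, gaining the four edges that take $6n-19$ up to $6n-15$.

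The main obstacle is exactly this simultaneous realisability. Having decoupled the two planes, one must still certify that a single increasing height sequence $z_1<\cdots <z_n$ underlies crossing-free straight-line drawings of $T_1$ on $P_1$ \emph{and} of $T_2$ on $P_2$ — i.e.\ that the heights admissible for a straight-line realisation of $T_1$ and those admissible for $T_2$ have a common member — and that the coordinate slides producing the extra edges do not secretly re-impose a coupling between the planes. I expect this to reduce to an explicit choice of coordinates (say, placing $v_2,\dots ,v_{n-1}$ along a shallow convex arc interior to a large hull triangle, the arc oriented differently in the two planes so that the induced triangulations share only the three forced edges), followed by a direct but tedious crossing check; pinning down the threshold $n=14$ at which the whole gadget, including the transversal edges, fits is part of that bookkeeping. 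Once the coordinates are fixed, the edge count, the three forced coincidences, and the verification that the map is a legitimate plane-projecting map are all immediate.
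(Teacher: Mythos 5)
You have correctly identified the right target structure --- two straight-line triangulations $T_1,T_2$ on a common vertex set, coupled only through the shared coordinate axis, sharing exactly the three forced edges ($v_1v_2$, $v_{n-1}v_n$, $v_1v_n$ in the height order), giving $2(3n-6)-3=6n-15$ --- and this is indeed the structure realized by the paper's construction. However, your writeup stops exactly where the proof has to begin: the entire content of the theorem is the \emph{existence} of such a pair of simultaneously realisable triangulations, and you explicitly defer this (``I expect this to reduce to an explicit choice of coordinates \dots followed by a direct but tedious crossing check; pinning down the threshold $n=14$ \dots is part of that bookkeeping''). The proposed route of taking the $6n-19$ construction of \cite{DGM16}, ``sliding the $x$- and $y$-coordinates apart,'' and ``adding the missing edges back'' is not an argument: you do not identify which four edges are gained, why uncrossing the offending pairs does not create new crossings elsewhere, or why the result is still a pair of full triangulations with only three common edges. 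As it stands this is a plausible plan, not a proof.

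For comparison, the paper avoids the global coordinate surgery entirely: it exhibits an explicit base graph $G_{14}$ with $69=6\cdot 14-15$ edges (two concrete drawings $H_{14}$, $M_{14}$ agreeing on the shared axis and overlapping in exactly three edges), and then grows $G_n$ by induction, adding one vertex $v_{k+1}$ per step with exactly six new edges: three in the $XY$-plane by placing $v_{k+1}$ inside a face of $H_k$ whose corners avoid $v_1,v_2,v_k$, and three in the $YZ$-plane by giving $v_{k+1}$ a $z$-coordinate above all others and joining it to the hull vertices $v_1,v_2,v_k$ of $M_k$. Each step preserves the invariant that both projections are triangulations sharing three edges, so the count $6k-15+6=6(k+1)-15$ is automatic. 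If you want to complete your version, you would need either to supply an explicit coordinate list and crossing verification for general $n$, or to adopt some such base-case-plus-induction scheme in which each incremental step is locally checkable.
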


\begin{figure}[h]
\centering
\includegraphics[width=6cm]{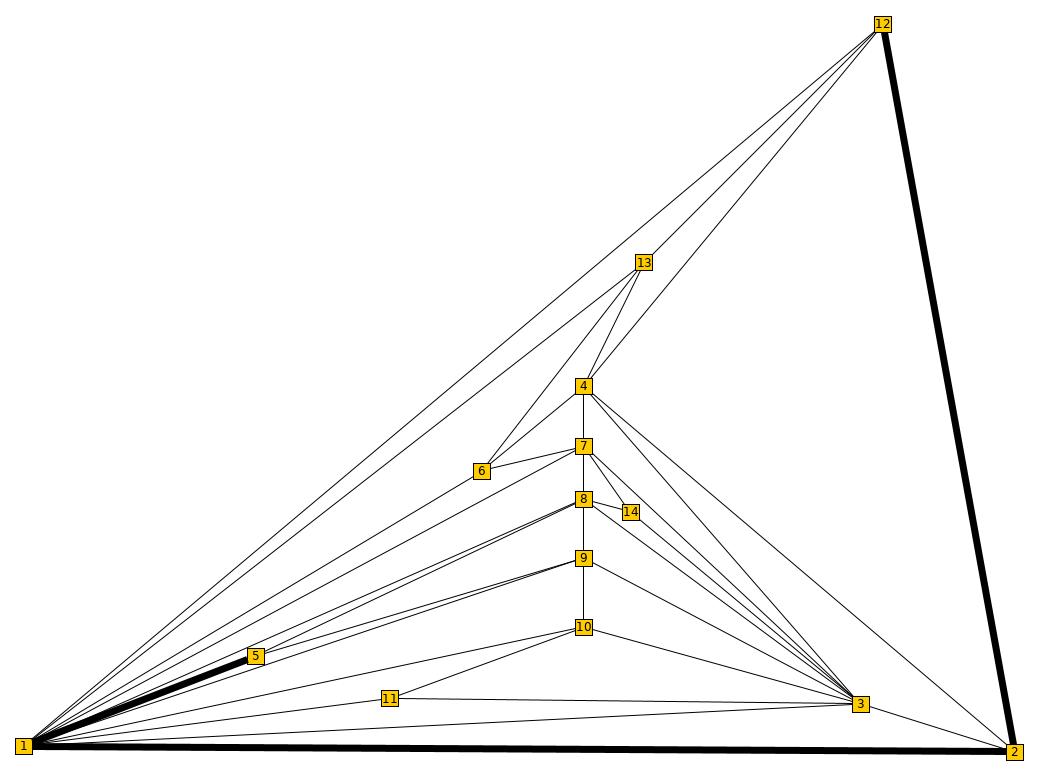}
\includegraphics[width=6cm]{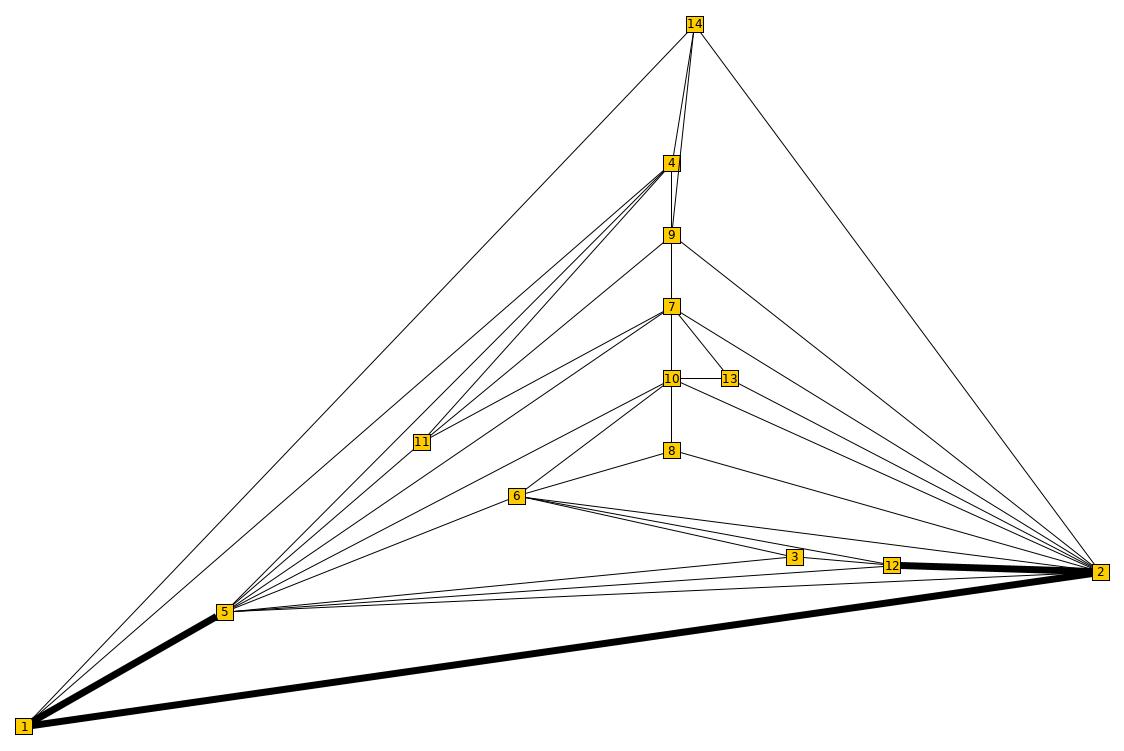}
\caption{On left: $H_{14}$, On right: $M_{14}$; $G_{14}$ has $6 \times 14-15 = 69$ edges. The dark edges are common to both planes. Also the exact vertex positions are not important, but the ordering of the vertices on the common axis should be the same.}

\label{fig:max-edge-fig}
\end{figure}

\begin{proof}

Let $H_n$, $M_n$ be the projection of $G_n$ on $XY$, $YZ$ planes respectively. Observe that if we fix the embedding of vertices of $G$ in $H$, then in $M$ we would have the freedom to move vertices along $Z$ axis because $z$ coordinate of vertices have not been fixed yet.

Figure \ref{fig:max-edge-fig} gives the construction of a graph $G_{14}$ with 14 vertices and $6 \times 14 - 15 = 69$ edges.

Let us assume that we are given $H_k, M_k$ which are the planar projections of $G_k$ on $XY, YZ$ planes respectively, such that $|E_k| = 6k-15$.

We now show that we can add a vertex $v_{k+1}$ with 6 neighbors, such that three of the new edges are added to $H_k$ to obtain $H_{k+1}$ and the other three are added to $M_k$ to obtain $M_{k+1}$.

In $H_k$, we place $v_{k+1}$ inside a triangle whose vertices are disjoint from the vertices present on the convex hull of $M_k$ namely $v_1, v_2, v_k$. Now the $(x,y)$ coordinates of $v_{k+1}$ are fixed we can take the $z$ coordinate of $v_{k+1}$ to be a value strictly greater than the $z$ coordinate of $v_k$. Now in $M_k$($YZ$ plane) we can add $v_{k+1}$ by connecting $v_{k+1}$ to $v_1,v_2, v_{k}$.

We can always find a triangle whose vertices should not contain any one of $v_1, v_2, v_k$. If $k$ is odd we add $v_{k+1}$ inside the triangle $v_6, v_4, v_{k-1}$ in $H_k$ and if $k$ is even we add $v_{k+1}$ inside the triangle $v_7,v_8, v_{k}$ in $H_k$.

Since we have added 6 edges to the graph $G_k$ the new graph $G_{k+1}$ contains $6k-15 + 6 = 6(k+1) - 15$. The vertex $v_{n+1}$ is also being mapped to a suitable point in $\mathbb{R}^3$.

Inductively using the above process, we can generate $G_n$ for all n such that $|E_n| = 6n-15$.

\end{proof}

We will now show that the above upper bound is in fact tight;
we first need the following definition.

\begin{definition}
We say that an embedding of a planar graph $G$ is maximally planar if no non-adjacent pair of vertices can be joined by a line-segment without intersecting the existing edges.
\end{definition}

\begin{theorem}\label{upper-bound-2planes}
Let $G$ be a connected graph on $n \geq 3$ vertices having an embedding in $\mathbb{R}^3$ such that the restriction to two of the planes form straight-line planar embeddings of two graphs. Then $G$ has at most $6n-15$ edges.
\end{theorem}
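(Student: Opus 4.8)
The plan is to bound the two plane-projections separately and then account for the overcounting on their common axis. Suppose $G$ is embedded in $\mathbb{R}^3$ so that $E(G) = E_{XY} \cup E_{YZ}$, where each $(V, E_{XY})$ and $(V, E_{YZ})$ is a straight-line planar graph drawn in the respective plane; here $Y$ is the shared axis. Write $H = (V, E_{XY})$ and $M = (V, E_{YZ})$. Since each is planar on $n$ vertices, $|E_{XY}| \le 3n - 6$ and $|E_{YZ}| \le 3n - 6$, giving the trivial bound $6n - 12$. To improve this to $6n - 15$ I need to find $3$ "missing" edges, i.e. to show the two projections cannot both be maximal planar and edge-disjoint-enough to reach $3n-6$ twice.

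The key structural observation is about the ordering of vertices along the common $Y$-axis. Because the $y$-coordinates of the $n$ vertices are real numbers, we may assume (perturbing if necessary, or handling ties separately) that they are distinct, inducing a linear order $u_1, u_2, \ldots, u_n$ on $V$. Project everything further onto the $Y$-axis: an edge $\{u_a, u_b\}$ of $H$ projects to the interval $[a,b]$ on the line, and likewise for $M$. The first step is to argue that in each of $H$ and $M$, the edges between the two "extreme" vertices $u_1$ and $u_n$ behave specially, and more generally that certain edges are forced to lie in \emph{both} plane-subgraphs. Concretely: an edge of $H$ incident to neither $u_1$ nor $u_n$ can be "swept" to the $Y$-axis without obstruction in a way that constrains $M$; I would make this precise by showing that if $H$ is maximal planar then it contains the path $u_1 u_2 \cdots u_n$ (the "leftmost/rightmost" monotone Hamiltonian path in the $Y$-order, which exists in any maximal planar straight-line drawing as the lower hull when vertices are sorted by $y$), and similarly $M$ contains a monotone Hamiltonian path in the same $Y$-order. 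These two Hamiltonian paths coincide (both equal $u_1 u_2 \cdots u_n$ along consecutive $Y$-neighbors — here the hypothesis that "the ordering on the common axis is the same" from the construction is exactly the feature being exploited), so at least $n-1$ edges are common to $E_{XY}$ and $E_{YZ}$. Wait — that would only give $6n - 12 - (n-1)$, too weak. So instead I would isolate just the $3$ edges of the triangle $u_1 u_2 u_n$ (or the edges among the three convex-hull-extreme vertices in the $Y$-direction), and show each of these three edges must appear in \emph{both} projections, hence is counted twice in $|E_{XY}| + |E_{YZ}| \ge |E(G)|$...

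The cleaner route, and the one I would actually carry out: use that $|E(G)| = |E_{XY}| + |E_{YZ}| - |E_{XY} \cap E_{YZ}|$, so it suffices to show $|E_{XY} \cap E_{YZ}| \ge 3$ whenever both projections have $3n-6$ edges, or more robustly that $|E_{XY}| + |E_{YZ}| - |E_{XY}\cap E_{YZ}| \le 6n-15$ directly. For the latter, take the three vertices $a, b, c$ that are first, second, and last in the common $Y$-order. In a maximal (hence connected, near-triangulated) straight-line planar drawing with this $Y$-order, the segments $ab$, $bc$, $ac$ are each unobstructed and join vertices that are consecutive or extreme in the order; by the "maximally planar" definition, any unobstructed segment between non-adjacent vertices could be added, so each of these three pairs is already an edge in $H$ — and by the identical argument in $M$ with the \emph{same} $Y$-order, each is an edge in $M$ too. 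Thus $\{ab, bc, ca\} \subseteq E_{XY} \cap E_{YZ}$, so $|E(G)| \le (3n-6) + (3n-6) - 3 = 6n-15$.

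The main obstacle I anticipate is justifying that these three specific segments are genuinely unobstructed in \emph{any} straight-line planar embedding with the prescribed $Y$-order — this needs a careful argument that the extreme-in-$Y$ vertices lie on the outer face and that the relevant segments stay inside available faces, together with handling the case where $G$'s projections are not maximal (for which one reduces to the maximal case by adding edges, which only increases the edge count and preserves the $Y$-order, so the bound for the maximal case implies it in general). A secondary technical point is the genericity assumption that $y$-coordinates are distinct; if two vertices share a $y$-coordinate the projection to the $Y$-axis is not injective but the edge between them is then "vertical" and a small rotation of the two planes about... — rather than rotating, I would just note that such a coincidence can only help (it forces that edge into a special position) and treat it as a degenerate sub-case, or perturb within the plane $XY$ without changing which pairs are visible in $YZ$. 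I expect the visibility/obstruction lemma for the triangle $\{a,b,c\}$ to be where the real work lies, and it is plausible the authors instead argue via Euler's formula on a suitable superposition graph; either way the target inequality $6n-15$ drops out once three common edges are located.
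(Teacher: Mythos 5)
Your overall strategy---replace each projection by a maximally planar one, locate edges forced into \emph{both} projections, and finish by inclusion--exclusion---is exactly the paper's. But the key visibility claim you flag as "where the real work lies" is in fact false as stated. Let $a,b,c$ be the first, second and last vertices in the common $y$-order. The segment $ab$ is genuinely unobstructed: any edge $uv$ crossing it would contain a point with $y$-coordinate strictly below $y_b$, forcing $u$ or $v$ to have $y$-coordinate below $y_b$, and the only such vertex is $a$ itself. This argument does \emph{not} extend to $ac$ or $bc$: an edge crossing the segment from the lowest to the highest vertex only needs an endpoint below the crossing point, and any of the $n-2$ intermediate vertices can supply that. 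Concretely, take the $4$-cycle $a\,u\,c\,v$ plus the chord $uv$ with $a=(0,0)$, $u=(-1,1)$, $v=(1,1)$, $c=(0,2)$: this drawing is maximally planar, $a$ and $c$ are the $y$-extremes, yet $ac$ is not an edge because it is blocked by $uv$. So your triangle $\{ab,bc,ca\}$ cannot be forced into $E_{XY}\cap E_{YZ}$, and the proof as proposed does not close.

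The repair, which is what the paper does, is to extract only \emph{two} forced common edges---one between the lowest and second-lowest vertices and one between the highest and second-highest (both justified by the argument that works for $ab$)---and to obtain the third saving by a case split. If some projection has at most $3n-7$ edges, then $|E_1|+|E_2|-|E_1\cap E_2|\le (6n-13)-2=6n-15$ already. If both projections have exactly $3n-6$ edges, each is a triangulation whose convex hull is a triangle; the $y$-extreme vertices $a$ and $c$ both lie on that triangle in each projection, hence $ac$ is an edge of both outer triangles, giving a third common edge and again $6n-12-3=6n-15$. Note that in this second case $ac$ is common not because the segment is unobstructed in an arbitrary maximal drawing (your counterexample above shows it need not be), but because a triangulation's hull has only three pairwise-adjacent vertices. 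You would need to add this case analysis to make your argument complete.
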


\begin{proof}
Consider an embedding of $G(V, E)$ such that the edges of $G$ are covered by planar drawings in two (projected) planes. Let $XY$ and $YZ$ be the two planes and let $G_1=(V,E_1)$ and $G_2=(V,E_2)$ be the two planar sub-graphs respectively, which are projected on these planes.

Clearly we can assume that the embeddings of both $G_1$ and $G_2$ are maximally planar.

Let \\
$A$ to be the vertex with lowest $y$ coordinate value,\\
$B$ to be the vertex with highest $y$ coordinate value,\\
$C$ to be the vertex with second lowest $y$ coordinate value,\\
$D$ to be the vertex with second highest $y$ coordinate value. 

\begin{claim2}\label{non-intersecting}
Both $AC$ and $BD$ belong to $G_1 \cap G_2$.
\end{claim2}

{\bf Proof of Claim \ref{non-intersecting}:}
Suppose for contradiction that $AC \notin G_1$. Since $G_1$ is maximally planar, there must be an edge $uv$ that intersects the line-segment joining $AC$. Therefore the $y$ co-ordinate of $u$ or the $y$ co-ordinate of $v$ must lie between the $y$ co-ordinates of $A$ and $C$, which contradicts the choice of $A,C$. The proof for $BD$ is identical.

We now consider two cases.

{\bf Case 1:} $|E_1|< 3n-6$ or $|E_2| < 3n-6$. 
In this case, we have: $|E_1 \cup E_2|=|E_1|+|E_2|-|E_1 \cap E_2| \leq 6n-13-2=6n-15$.

{\bf Case 2:} $|E_1|=|E_2|=3n-6$.
In this case, we show that in addition to $AC$ and $BD$, the edge $AB$ also belongs to both $E_1$ and $E_2$, which shows that $|E_1 \cup E_2| \leq 6n-15$.

Since $G_1$ has $3n-6$ edges, its convex hull is a triangle. By the definition of $A,B$, we see that $AB$ should be on the convex hull, and hence is an edge of $G_1$. Similarly, $AB$ belongs to $G_2$ as well.

This completes the proof of Theorem \ref{upper-bound-2planes}. \qed
\end{proof}

We now give an upper bound on graphs with plane-projectable dimension three.

\begin{theorem}\label{upper-bound-3planes}
Let $G$ be a connected graph on $n \geq 3$ vertices having an embedding in $\mathbb{R}^3$ such that the restriction to  the three planes form straight-line planar embeddings of three graphs. Then $G$ has at most $9n-24$ edges.
\end{theorem}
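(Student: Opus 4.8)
The plan is to follow the strategy of Theorem~\ref{upper-bound-2planes} but with sharper bookkeeping: three planar subgraphs give only $3(3n-6)=9n-18$ edges, so we must account for six ``wasted'' edges. Write $G=(V,E)$ with $E=E_1\cup E_2\cup E_3$, where $G_i$ is the straight-line planar drawing on the $XY$-, $YZ$-, $XZ$-plane respectively. First I would make two harmless reductions. Since $\binom{n}{2}\le 9n-24$ for $3\le n\le 16$, the bound is trivial in that range, so I assume $n\ge 17$. A sufficiently small generic perturbation of the $n$ points keeps all three drawings planar and makes the point set generic: pairwise distinct coordinates on each axis, and no three projected points collinear in any coordinate plane. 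Finally, by greedily adding edges one may assume each $G_i$ is maximally planar in the sense of the definition preceding Theorem~\ref{upper-bound-2planes}; this can only enlarge $E$, so it suffices to prove the bound under this assumption.

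The first real step is to list, for each of the three plane-pairs, edges forced to lie in the intersection of the two corresponding edge sets. Order the vertices along an axis $\xi\in\{x,y,z\}$ as $u_1^\xi,\dots,u_n^\xi$. Arguing exactly as in Claim~\ref{non-intersecting}: if the ``low'' edge $\{u_1^\xi,u_2^\xi\}$ were missing from a maximally planar $G_i$ whose plane contains the $\xi$-axis, the crossing edge witnessing maximality would have to be incident to $u_1^\xi$ (the unique vertex whose $\xi$-coordinate lies below that of $u_2^\xi$), which is impossible since two segments sharing an endpoint cannot cross in their interiors. The same holds for the ``high'' edge $\{u_{n-1}^\xi,u_n^\xi\}$. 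This gives two forced common edges for each plane-pair. Moreover, if $|E_i|=3n-6$ then $G_i$ is a triangulation, so the outer face of its straight-line drawing is its (triangular) convex hull; hence, for either axis $\xi$ of that plane, the two $\xi$-extreme vertices are among the three hull vertices and the ``diameter'' edge joining them lies in $E_i$. So if both planar graphs sharing an axis have $3n-6$ edges, we obtain one additional forced common edge for that plane-pair.

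The second step is the count. Set $k_i=(3n-6)-|E_i|\ge 0$, and let $b$, $c$ be the number of edges of $G$ lying in exactly two, resp.\ all three, of $E_1,E_2,E_3$; then $|E|=\sum_i|E_i|-b-2c=(9n-18)-(k_1+k_2+k_3)-(b+2c)$, so it suffices to show $(k_1+k_2+k_3)+(b+2c)\ge 6$. Gather the forced common edges of plane-pair $\{i,j\}$ into a set $F_{ij}$; then $|F_{ij}|=2$, or $|F_{ij}|=3$ when both $|E_i|,|E_j|$ equal $3n-6$. Within one $F_{ij}$ all edges are distinct (for $n\ge 4$ the low, high and diameter edges live on four distinct vertices); an edge appearing in two different $F_{ij}$'s lies in all three of $E_1,E_2,E_3$, hence is counted by $c$, while an edge appearing in exactly one $F_{ij}$ lies in at least two of them. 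Counting incidences, at most $\min_{ij}|F_{ij}|$ edges can appear in all three lists, so $b+2c\ge(|F_{12}|+|F_{13}|+|F_{23}|)-\min_{ij}|F_{ij}|$. A short case check on how many $|E_i|$ equal $3n-6$ then finishes it: if all three do, $b+2c\ge 9-3=6$; if exactly one is smaller, $b+2c\ge 7-2=5$ while $k_1+k_2+k_3\ge 1$; and if two or three are smaller, $b+2c\ge 6-2=4$ while $k_1+k_2+k_3\ge 2$. In every case the sum is at least $6$.

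The part I expect to be trickiest is precisely this last bookkeeping: the six or nine forced edges may coincide, and one worries about ``triple'' edges, which consume a forced incidence from each of the three lists but are counted only once by $c$ (contributing $2$ to $b+2c$, not $3$). The point that rescues the count is that a triple edge absorbs at most one incidence from each $F_{ij}$, so their number is at most $\min_{ij}|F_{ij}|$ --- and the cases in which some $|F_{ij}|$ drops to $2$ are exactly the cases in which some $k_i\ge 1$ makes up the shortfall. The remaining technical care is in the reductions (maximal planarity, genericity) and in verifying that ``maximal planar with $3n-6$ edges'' really does force the convex hull to be the outer triangle; both are standard.
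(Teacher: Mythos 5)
Your proof is correct, and it rests on exactly the same two forced-edge facts as the paper's argument: Claim~\ref{non-intersecting} applied to the shared axis of each pair of planes (giving the two ``extreme'' common edges), and the observation that a maximally planar drawing with $3n-6$ edges has a triangular convex hull containing both axis-extreme vertices (giving the third). Where you genuinely differ is the bookkeeping. You run a symmetric inclusion--exclusion over all three plane-pairs, which forces you to confront forced edges lying in two or three of the lists $F_{ij}$; your resolution --- a triple edge contributes $2$ to $b+2c$ rather than $3$, but there are at most $\min_{ij}|F_{ij}|$ of them, and the pairs whose $|F_{ij}|$ drops to $2$ are exactly those compensated by some $k_i\ge 1$ --- is valid but delicate. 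The paper sidesteps this entirely by writing $|E_1\cup E_2\cup E_3|\le|E_1|+|E_2\setminus E_1|+|E_3\setminus E_1|$ and bounding only the two intersections $E_1\cap E_2$ and $E_1\cap E_3$ (via the shared $Y$- and $X$-axes respectively), so multiply covered edges can only help and the case analysis reduces to counting how many $|E_i|$ equal $3n-6$. The paper's count is therefore strictly simpler, while yours is symmetric and recovers the same refinements (e.g.\ $9n-25$ when no $G_i$ is a triangulation, matching the paper's Case 4); your explicit perturbation to guarantee distinct coordinates along each axis, and the reduction to $n\ge 17$ so that the forced edges are pairwise distinct, are details the paper leaves implicit.
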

\begin{proof}
Consider an embedding of $G(V, E)$ such that the edges of $G$ are covered by planar drawings in three (projected) planes. Let $XY$, $YZ$ and $ZX$ be the two planes and let $G_1=(V,E_1)$, $G_2=(V,E_2)$ and $G_3=(V,E_3)$ be the three planar sub-graphs respectively, which are projected on these planes.

We may assume that $G_1, G_2, G_3$ are maximally planar.\\
Here we have to consider few cases:

\textbf{Case 1}: $|E_1| = |E_2| = |E_3| = 3n-6$. In this case we use the same argument as Theorem \ref{upper-bound-2planes}, we get $|E_2 \setminus E_1| = 3n-9$, $|E_3 \setminus E_1| = 3n-9$.

$|E_1 \cup E_2 \cup E_3| \leq |E_1| + |E_2 \setminus E_1|  + |E_3 \setminus E_1|$.

$\implies |E_1 \cup E_2 \cup E_3| \leq (3n-6) + (3n-9) + (3n-9) = 9n-24$.

\textbf{Case 2}: $|E_1| = |E_2| = 3n-6, |E_3| \leq 3n-7$. In this case if we use the same argument as Theorem \ref{upper-bound-2planes} we get $|E_2 \setminus E_1| = 3n-9$.

Since $G_1, G_3$ are maximally planar using Claim \ref{non-intersecting}, we get $|E_3 \setminus E_1| \leq 3n-7 -2 = 3n-9$.

$\implies |E_1 \cup E_2 \cup E_3| \leq (3n-6) + (3n-9) + (3n-9) = 9n-24$.

\textbf{Case 3}:$|E_1| =3n-6, |E_2| \leq 3n-7, |E_3| \leq 3n-7$.

Since $G_1, G_2, G_3$ are maximally planar using Claim \ref{non-intersecting}, we get $|E_3 \setminus E_1|\leq 3n-7-2=3n-9$.

$|E_2 \setminus E_1| \leq 3n-7 -2 = 3n-9$.

$\implies |E_1 \cup E_2 \cup E_3| \leq (3n-6) + (3n-9) + (3n-9) = 9n-24$.

\textbf{Case 4}:$|E_1| \leq3n-7,  |E_2| \leq 3n-7, |E_3| \leq 3n-7$.

Since $G_1, G_2, G_3$ are maximally planar, using Claim \ref{non-intersecting}, we get $|E_3 \setminus E_1| \leq
|E_2 \setminus E_1| \leq 3n-7 -2 = 3n-9$.
Similarly $|E_3 \setminus E_1 | = \leq 3n-9$;

$\implies |E_1 \cup E_2 \cup E_3| \leq (3n-7) + (3n-9) + (3n-9) = 9n-25$.

This completes the proof of Theorem \ref{upper-bound-3planes}.
\end{proof}

\section{Relation with linear arboricity and maximum degree}
A linear forest is a forest in which every tree is a path. 
The linear arboricity of a graph $G$ is the minimum number of linear forests into which the edges of $G$ can be decomposed.

We have the following.


\begin{proposition}\label{la}
If $G$ has linear arboricity at most $k$, then there is an embedding of $G$ in $\mathbb{R}^{k}$ such that the edges of $G$ can be drawn on the following $k$ standard planes: for $i=1,\ldots,k-1$, the $i$th plane is $\{x \in \mathbb{R}^{k}: x_j=0 \forall j \notin \{1,i\}\}$, and the $k$th plane is $\{x \in \mathbb{R}^k: x_j=0 \forall j \notin \{2,3\}\}$. In particular, $pdim(G) \leq k$.
\end{proposition}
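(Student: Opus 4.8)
The plan is to use the linear-arboricity hypothesis to split the drawing into one linear forest per coordinate plane, exploiting the following folklore fact: if $F$ is a linear forest on vertex set $V$ and $\sigma$ is an ordering of $V$ in which the vertices of every path of $F$ appear consecutively and in their order along that path, then for \emph{any} function $g\colon V\to\mathbb{R}$, placing each vertex $v$ at the point $(\rho_\sigma(v),\,g(v))\in\mathbb{R}^2$, where $\rho_\sigma(v)$ is the rank of $v$ in $\sigma$, produces a crossing-free straight-line drawing of $F$. The justification is immediate: along each path the first coordinate is strictly increasing, so each path is drawn $x$-monotone and hence does not cross itself; two different paths occupy disjoint ranges of the first coordinate, so they do not cross each other; and the interior of any edge spans an open interval strictly between two consecutive integers, so it contains no vertex. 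The crucial feature is that the second coordinate is entirely unconstrained, which is what allows several forests to share a coordinate axis.

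Concretely, assume $k\ge3$ (the statement is only meaningful there, since the plane $(2,3)$ requires $\mathbb{R}^3$). Decompose $E(G)$ into linear forests $F_1,\dots,F_k$, padding with empty forests if necessary, and for each $j$ fix an ordering $\sigma_j$ of $V$ that concatenates the paths of $F_j$ and then lists the remaining vertices; write $r_j(v):=\rho_{\sigma_j}(v)$. Define $f\colon V\to\mathbb{R}^k$ by
\[
f(v)=\bigl(r_1(v),\,r_3(v),\,r_2(v),\,r_4(v),\,r_5(v),\,\dots,\,r_k(v)\bigr),
\]
that is, $x_1(v)=r_1(v)$, $x_2(v)=r_3(v)$, $x_3(v)=r_2(v)$, and $x_j(v)=r_j(v)$ for $4\le j\le k$. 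Assign $F_1$ to the plane $(1,2)$, $F_2$ to the plane $(1,3)$, $F_3$ to the plane $(2,3)$, and $F_j$ to the plane $(1,j)$ for $4\le j\le k$, and set $E_{1,2}=E(F_1)$, $E_{1,3}=E(F_2)$, $E_{2,3}=E(F_3)$, $E_{1,j}=E(F_j)$ for $j\ge4$, with all other $E_{i,j}=\emptyset$.

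It then remains to check each plane, in every case applying the folklore fact with the ``monotone'' coordinate taken to be the one equal to the rank in that forest's own ordering: on plane $(1,2)$ that coordinate is $x_1=r_1$; on plane $(1,3)$ it is $x_3=r_2$; on plane $(2,3)$ it is $x_2=r_3$; and on plane $(1,j)$, $j\ge4$, it is $x_j=r_j$. Injectivity of $f$, and of each of the projections onto these planes, holds because in every case one of the two coordinates of the plane is a rank function, hence a bijection onto $\{1,\dots,n\}$. Since $\{F_1,\dots,F_k\}$ partitions $E(G)$, the sets $E_{i,j}$ decompose $E(G)$, so $f$ is a plane-projecting map and $pdim(G)\le k$.

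The one genuinely delicate point --- and the step I expect to be the crux --- is the coordinate bookkeeping. The planes $(1,2),(1,3),\dots,(1,k)$ all share the axis $x_1$, while the extra plane $(2,3)$ shares $x_2$ with $(1,2)$ and $x_3$ with $(1,3)$; a naive assignment of monotone directions over-constrains these three axes. The resolution is to let $x_1$ (rather than $x_2$) carry the monotone structure on the plane $(1,2)$, which leaves $x_2$ free to be the monotone direction for $F_3$ on the plane $(2,3)$; the coordinate $x_3$ on that plane may then be forced, since one monotone coordinate per plane suffices. Once this assignment is fixed, every remaining verification is routine.
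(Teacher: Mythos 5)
Your proof is correct, and its core ingredient --- a linear forest drawn with one coordinate equal to the rank in a path-consecutive ordering and the other coordinate completely arbitrary --- is exactly the content of the paper's Lemma~\ref{hc-order} specialized to linear forests (the paper's proof of that lemma also places spine vertices at $x=i$, i.e.\ at their rank). The difference is that the paper never actually assembles Lemma~\ref{hc-order} into a proof of Proposition~\ref{la}; it only remarks that the lemma is ``the main idea.'' What you supply, and what is genuinely needed, is the resolution of the over-constraint among the three planes $(1,2)$, $(1,3)$, $(2,3)$: a naive application of the lemma in its ``given $y$, choose $x$'' form creates a cyclic dependency (coordinate $1$ determined from $2$, $2$ from $3$, $3$ from $1$). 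Your matching --- monotone coordinate $1$ for the plane $(1,2)$, coordinate $2$ for $(2,3)$, coordinate $3$ for $(1,3)$, and coordinate $j$ for $(1,j)$ with $j\ge 4$, so that $x_1=r_1$, $x_2=r_3$, $x_3=r_2$, $x_j=r_j$ --- breaks the cycle because in your formulation the non-monotone coordinate is unconstrained rather than merely ``prescribed in advance.'' All the verifications (disjoint unit $x$-intervals for distinct edges, injectivity via the rank bijection, the $k\ge 3$ caveat) check out. In short: same approach as the paper intends, but you have written the proof the paper omitted, including its one delicate step.
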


In \cite{EP84}, it was shown that graphs of maximum degree 5 have linear arboricity at most 3.
Thus, we get the following.

\begin{corollary}
Any graph of maximum degree 5 is plane-projectable.
\end{corollary}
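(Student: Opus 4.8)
The plan is to derive the corollary from Proposition \ref{la}: by \cite{EP84} a graph of maximum degree $5$ has linear arboricity at most $3$, so Proposition \ref{la} with $k=3$ gives $pdim(G)\le 3$, i.e.\ such a $G$ is plane-projectable in three dimensions. Thus the real work is in Proposition \ref{la}, and the rest of this plan concerns it.

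First I would fix a decomposition $E(G)=L_1\cup\cdots\cup L_k$ into $k$ linear forests. The core is the following elementary drawing lemma, which I would state and prove separately. Let $L$ be a linear forest on $V$; order $V$ by concatenating the vertex-sequences of the paths of $L$ (each path traversed in either direction, the paths in any order) and appending the isolated vertices, and call this order $\sigma_L$. If the first coordinate of each $v\in V$ is set equal to the rank of $v$ in $\sigma_L$, then for \emph{any} assignment of the second coordinate the straight-line drawing of $L$ is crossing-free. The proof is short: along a single path the first coordinate is strictly monotone, so two edges of that path either share a vertex or have disjoint first-coordinate ranges; two edges from distinct paths lie in disjoint vertical strips since the paths occupy disjoint intervals of ranks; and no vertex lies in the relative interior of a non-incident edge, for the same reasons.

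It then remains to allocate the forests to coordinate axes. Reading the $k$ planes in the statement as $\mathrm{span}(e_1,e_i)$ for $2\le i\le k$ together with $\mathrm{span}(e_2,e_3)$, I would place $v$ at the point of $\mathbb{R}^k$ with coordinates $x_1=$ rank of $v$ in $\sigma_{L_1}$, $x_2=$ rank in $\sigma_{L_k}$, $x_3=$ rank in $\sigma_{L_2}$, and $x_{j+1}=$ rank in $\sigma_{L_j}$ for $3\le j\le k-1$. Drawing $L_1$ on $\mathrm{span}(e_1,e_2)$, $L_2$ on $\mathrm{span}(e_1,e_3)$, $L_j$ on $\mathrm{span}(e_1,e_{j+1})$ for $3\le j\le k-1$, and $L_k$ on $\mathrm{span}(e_2,e_3)$, one of the two axes of each plane carries precisely the concatenation-order of the forest drawn there ($e_1$ for $L_1$, $e_3$ for $L_2$, $e_{j+1}$ for $L_j$, $e_2$ for $L_k$), so the lemma applies on every plane and each $L_i$ is drawn planarly. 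Each coordinate is a bijection onto $\{1,\dots,n\}$, so the map is injective and every projection $\pi_{i,j}$ is injective; the planes not in our list carry no edges. Hence this is a plane-projecting map, and $pdim(G)\le k$.

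The only delicate point — and the step I would verify most carefully — is the bookkeeping just used: one must be sure that every forest can be assigned an axis belonging to the plane it is drawn on, with no axis carrying two different orders. This succeeds above because axis $e_1$, though it lies on $k-1$ of the planes, needs to serve only $L_1$ (the other planes through $e_1$ use their second axis), axis $e_2$ serves $L_k$ (the plane $\mathrm{span}(e_1,e_2)$ already being handled by $e_1$), axis $e_3$ serves $L_2$, and each of $e_4,\dots,e_k$ serves exactly one of the remaining forests. If the indexing of the planes in the statement is meant slightly differently, the same counting still produces such an assignment and one just re-checks it. Everything else — the planarity of each individual drawing and the injectivity of the projections — is routine.
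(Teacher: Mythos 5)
Your proposal is correct and follows the paper's route exactly: cite \cite{EP84} for linear arboricity at most $3$ and apply Proposition \ref{la} with $k=3$. Your careful working-out of Proposition \ref{la} itself (one ``rank'' axis per forest, with the explicit assignment of forests to axes forming a system of distinct representatives) is a legitimate and slightly more detailed version of what the paper only sketches via Lemma \ref{hc-order}, so there is no gap.
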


In \cite{A88}, Alon showed that a graph of maximum degree $\Delta$ has linear arboricity at most $\dfrac{\Delta}{2}+o(\Delta)$.
Thus, we have: 
$pdim(G) \leq \Delta(G)\left(\dfrac{1}{2}+o(1)\right)$.

We shall actually prove a stronger form of Proposition \ref{la}, in which we replace linear arboricity by caterpillar arboricity, which we define below.

A caterpillar tree is a tree in which all the vertices are within distance 1 of a central path.
A caterpillar forest is a forest in which every tree is a caterpillar tree.
The {\it caterpillar arboricity} of a graph $G$ is the minimum number of caterpillar forests into which the edges of $G$ can be decomposed. This has been studied previously in \cite{GO09}.

The main idea behind Proposition \ref{la} is the following.
\begin{lemma}\label{hc-order}
Given a caterpillar tree $G$ with vertex set $V=\{v_1,v_2,\ldots,v_n\}$, and $n\geq 2$ distinct real numbers $y_1,\ldots,y_n$, there exist $n$ real numbers $x_1,\ldots,x_n$
such that $G$ has a straight-line embedding with the vertex $v_i$ mapped to $(x_i,y_i)$.
\end{lemma}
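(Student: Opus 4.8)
The plan is to exploit the defining structure of a caterpillar. After deleting all of its leaves, what remains is a path, the \emph{spine} $s_1 s_2 \cdots s_m$; every other vertex of $G$ is then a leaf adjacent to exactly one spine vertex, so $V$ decomposes into disjoint \emph{clusters} $C_1,\ldots,C_m$, where $C_i$ consists of $s_i$ together with all leaves attached to $s_i$. The degenerate cases in which $m\le 1$ (namely $G$ a single edge or a star) should be disposed of first and directly, since there the claim reduces to ``segments emanating from one point do not cross''.

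For $m\ge 2$, I would first fix pairwise-disjoint closed intervals $I_1<I_2<\cdots<I_m$ on the $x$-axis, one per cluster, and require every vertex of $C_i$ to receive an $x$-coordinate inside $I_i$ (its $y$-coordinate being the prescribed $y_\ell$). There is still complete freedom inside each $I_i$, and I would choose all $n$ of the $x$-coordinates generically so that no three of the resulting points are collinear; in particular no vertex lies in the interior of an edge it is not incident to, and two edges sharing a vertex meet only at that vertex. Note that every leaf edge of $C_i$ has its $x$-range contained in $I_i$, and the spine edge $s_js_{j+1}$ has its $x$-range contained in the span of $I_j$ and $I_{j+1}$ and meets no other $I_\ell$.

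Next I would check the (few) types of non-incident edge pairs. Two leaf edges of the same cluster $C_i$ share $s_i$. Two leaf edges of distinct clusters, a leaf edge of $C_i$ against a spine edge $s_js_{j+1}$ with $i\notin\{j,j+1\}$, and two spine edges $s_js_{j+1}$, $s_ks_{k+1}$ with $|j-k|\ge 2$ all have disjoint $x$-ranges by the containment just noted, hence cannot intersect. The only remaining pairs — a leaf edge of $C_i$ against a spine edge incident to $s_i$, and two consecutive spine edges — share an endpoint. So the straight-line drawing with $v_i\mapsto(x_i,y_i)$ has no crossings, proving the lemma.

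I do not expect a genuine obstacle: the one thing that needs care is the interval bookkeeping that makes the ``disjoint $x$-range'' argument airtight (above all, that $s_js_{j+1}$ never re-enters an interval other than $I_j$ or $I_{j+1}$), together with a clean treatment of the degenerate caterpillars where the spine is empty or a single vertex. Once those are pinned down, genericity of the $x$-coordinates does the rest.
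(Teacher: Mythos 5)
Your proof is correct and follows essentially the same approach as the paper: lay the spine out from left to right along the $x$-axis and confine each spine vertex's leaves to a small $x$-range associated with it, so that every pair of non-incident edges has disjoint $x$-ranges. The paper's version simply sets $x(w_i)=i$ and $x=i+1$ for all leaves of $w_i$ and asserts planarity; your interval-plus-genericity bookkeeping is a more carefully justified write-up of the same idea.
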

\begin{proof}

Let $w_1, w_2, \ldots w_k$ be the vertices on the central path such that $w_i$ has an edge with $w_{i-1}$ and  $w_{i+1}$. All the indices are taken modulo $k$.
Also, let $L_i$ denote the set of leaf vertices adjacent to $w_i$.

We set the $x$ co-ordinate of $w_i$ to be $i$, and the $x$ co-ordinate of every vertex of $L_i$ to be $i+1$.
Clearly the edges of the caterpillar drawn with the above embedding are non-crossing. \qed
\end{proof}

We remark that the above result and concept have also been studied as "unleveled planar graphs" in \cite{EFK06}.

\begin{lemma}
Given a cycle $G$ with vertex set $V=\{v_1,v_2,\ldots,v_n\}$, and $n\geq 2$ distinct real numbers $y_1,\ldots,y_n$, there exist $n$ real numbers $x_1,\ldots,x_n$
such that $G$ has a straight-line embedding with the vertex $v_i$ mapped to $(x_i,y_i)$.
\end{lemma}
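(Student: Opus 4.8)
The plan is to cut the cycle at its two height-extremal vertices, draw the two resulting sub-paths in opposite closed half-planes, and glue them along the $y$-axis. First I would relabel cyclically so that $v_1$ has the smallest of the prescribed heights, and let $v_q$ be the vertex of largest height; place $v_1$ at $(0,y_1)$ and $v_q$ at $(0,y_q)$. Then $G$ is the union of the two internally disjoint paths $P_L=v_1v_2\cdots v_q$ and $P_R=v_qv_{q+1}\cdots v_nv_1$, and every internal vertex of either path has height strictly between $y_1$ and $y_q$. It therefore suffices to prove the following half-plane version of Lemma~\ref{hc-order}: if $q_0q_1\cdots q_m$ is a path whose endpoint $q_0$ has the minimum of the distinct prescribed heights, then it has a straight-line planar drawing inside $\{x\le 0\}$ with $q_0$ and $q_m$ on the line $x=0$ at their prescribed heights and $x(q_i)<0$ for $0<i<m$; reflecting in a horizontal line gives the mirror statement for a path whose first vertex has the \emph{maximum} height (drawn inside $\{x\ge 0\}$). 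Granting this, I apply it to $P_L$ and its mirror to $P_R$; the two drawings agree on $v_1$ and $v_q$, and since the only points of either drawing on the line $x=0$ are $v_1$ and $v_q$ (together with a single connecting segment in the degenerate cases $q=2$ or $q=n$, which cannot both occur), the two drawings meet only at these shared vertices, so their union is a simple polygon realizing all prescribed heights.

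For the half-plane claim I would use the explicit assignment $x(q_0)=x(q_m)=0$ and $x(q_i)=-M^{m-i}$ for $0<i<m$, with $M$ a large constant fixed at the very end; thus the interior vertices march from the hugely negative $q_1$ back toward $0$ at $q_{m-1}$. For a pair of non-adjacent edges not involving $q_0q_1$, the $x$-coordinates of the relevant endpoints are strictly $M$-separated, so the two edges have disjoint $x$-projections and cannot cross. For a pair involving $q_0q_1$, over the interval where the two $x$-projections overlap the edge $q_0q_1$ stays at height within $O(1/M)$ of the global minimum $y_0$, while the other edge lies at heights bounded away from $y_0$ (its endpoints being other vertices, all of height $>y_0$); hence once $M$ exceeds a bound depending only on the heights, $q_0q_1$ stays strictly below the other edge. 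A short check of the few remaining sub-cases, together with the observation that no triple of vertices is collinear for generic (or sufficiently large) $M$, completes the claim.

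I expect the main obstacle to be exactly this planarity check in the half-plane claim — in particular, getting the direction of the interior ``staircase'' right, since it is the neighbour $q_1$ of the pinned extremal endpoint that must be pushed far out, not the others, and then confirming that this one long edge cannot be crossed by the rest of the drawing. Everything downstream — the reduction to $P_L$ and $P_R$ and the gluing along the $y$-axis — is then routine, modulo the easy degenerate cases in which one of $P_L,P_R$ is a single edge.
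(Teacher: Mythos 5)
Your proof is correct, but it takes a genuinely different route from the paper's. The paper keeps the cycle as a single open path: it pins $v_1$ (the lowest vertex), draws the path $v_1v_2\cdots v_n$ with Lemma~\ref{hc-order}, and then closes the cycle by re-placing $v_n$ on a line through $v_1$ whose slope is smaller than every slope from $v_1$ to the other vertices, so that the closing edge $v_1v_n$ passes below the rest of the drawing. You instead cut the cycle at \emph{both} height-extremal vertices, pin them to the line $x=0$, and draw the two arcs in opposite closed half-planes using a geometric staircase $x(q_i)=-M^{m-i}$; the half-plane separation makes the non-interference of the two arcs immediate, and the $M$-separation of $x$-projections plus the ``$q_0q_1$ hugs the minimum height'' estimate handles planarity within each arc. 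The trade-off: the paper's argument reuses its path lemma verbatim and only has to control one extra edge (though as written it is somewhat terse about why moving $v_n$ keeps the edge $v_{n-1}v_n$ crossing-free), whereas yours is more self-contained and symmetric but requires proving the strengthened path claim with both endpoints pinned to a vertical line. One small slip in your write-up: for $P_R$ you want the reflection in a \emph{vertical} line (sending $x\le 0$ to $x\ge 0$) applied to the path traversed starting from $v_1$, which is again the minimum-height endpoint; the ``reflect in a horizontal line / first vertex has maximum height'' phrasing describes a different and unneeded variant, but this is cosmetic and does not affect the argument.
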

\begin{proof}
Without loss of generality, let $v_1, v_2, \ldots v_n$ be the vertices on the cycle such that $v_i$ has an edge with $v_{i-1}$ and $v_{i+1}$ and $v_1$ be the vertex with smallest y coordinate value. All the indices are taken modulo $n$.

We first remove the edge between $v_1$ and $v_n$ so that the remaining graph is a path for which we use the previous lemma to construct the embedding.

Now to add back the edge $v_1v_n$, we have to make sure that none of the other edges intersect with the edge between $v_1$ and $v_n$. 
Let $slope_i$ to be the slope between $v_1$ and $v_i$, and note that this is positive for all $i$, since $v_1$ has the lowest y coordinate. Let $m=min_{i}\{slope_i\}$. We now draw a line $L$ through $v_1$ with slope less than $m$ and place the vertex $v_n$ on $L$, as illustrated in the figure below.


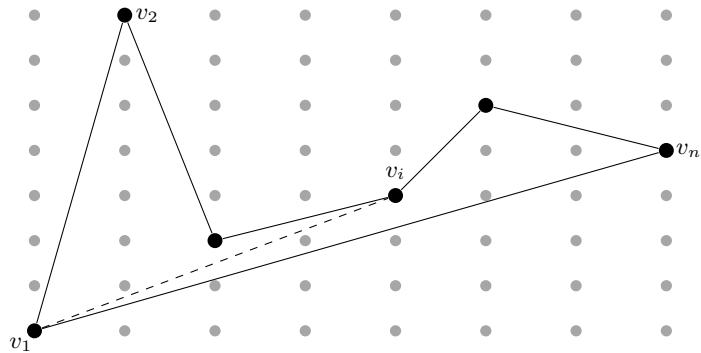
\begin{figure}
\centering
\begin{tikzpicture}[scale=.6,auto=center,every node/.style={circle,inner sep=1.5pt}]
 \foreach \x in {0,2,...,14}
 \foreach \y in {0,1,...,7}
      \node[fill=black!35] at (\x,\y){} ;

  \node [fill=black,inner sep=2pt] (v1) at (0,0){};
 \node [fill=black,inner sep=2pt] (v2) at (2,7){};
  \node [fill=black,inner sep=2pt] (v3) at (4,2){};
  \node [fill=black,inner sep=2pt] (vi) at (8,3){};
 \node [fill=black,inner sep=2pt] (vj) at (10,5){};
 \node [fill=black,inner sep=2pt] (vn) at (14,4){};
 
 \draw (v1)--(v2)--(v3);
 \draw (v3)--(vi)--(vj);
 \draw (vj)--(vn)--(v1);
 \draw[dashed] (v1)--(vi);
 
 \node [fill=none] at (-0.3,-0.3) {$v_1$};
 \node [fill=none] at (2.5,7) {$v_2$};
 \node [fill=none] at (8,3.5) {$v_i$};
 \node [fill=none] at (14.5,4) {$v_n$};
 
\end{tikzpicture}
\caption{Cycle graph with given $y$ co-ordinates}
\end{figure}
\end{proof}

The proposition below also follows from an application of Lemma \ref{hc-order}.

\begin{proposition}
Let $G$ be the edge-union of a planar graph and $d$ paths. Then $pdim(G) \leq d+2$.
\end{proposition}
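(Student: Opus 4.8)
The plan is to realize $G$ in $\mathbb{R}^{d+2}$, using one coordinate axis as a common ``spine'' shared by the planar part and by all $d$ paths, and giving each path its own private axis. Write $G = G_0 \cup P_1 \cup \cdots \cup P_d$, where $G_0$ is planar and each $P_j$ is a path.

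First I would fix a straight-line planar embedding of $G_0$ (one exists by F\'ary's theorem) and, after a generic rotation of the plane, assume its two coordinate functions $x_1(\cdot)$ and $x_2(\cdot)$ each take pairwise distinct values on the vertex set; this is legitimate because only finitely many rotation angles make two of the (distinct) vertex images agree in a coordinate. I then place $G_0$ on the $(1,2)$-plane, giving vertex $v$ first two coordinates $(x_1(v), x_2(v))$; the projection to that plane is injective and straight-line planar by construction.

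Next, for each $j \in \{1,\dots,d\}$ I would apply Lemma~\ref{hc-order} to $P_j$: since a path is a caterpillar and the reals $\{x_2(v) : v \in V\}$ are pairwise distinct, there exist reals $x_{j+2}(v)$ for which $P_j$ has a straight-line planar embedding with $v \mapsto (x_2(v), x_{j+2}(v))$, and I draw $P_j$ on the $(2,j+2)$-plane. (If $P_j$ misses some vertices, apply the lemma to the covered ones and assign the remaining values of $x_{j+2}$ generically.) Setting $\pi(v) = (x_1(v), x_2(v), x_3(v), \dots, x_{d+2}(v))$ — which is injective because $x_1$ is — together with the edge decomposition $E_{1,2} = E(G_0)$, $E_{2,j+2} = E(P_j)$ for $1 \le j \le d$, and $E_{i,j} = \emptyset$ otherwise, exhibits $\pi$ as a plane-projecting map of $G$, so $pdim(G) \le d+2$.

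The substantive content is just the observation that Lemma~\ref{hc-order} lets each path be redrawn to match any prescribed values along the shared axis, so a single fresh dimension per path suffices. The only points needing care are bookkeeping: arranging distinct second coordinates for the embedding of $G_0$ (the rotation argument above), and — if the definition of a plane-projecting map is read as requiring every $\pi_{i,j}$ to be injective rather than only those with $E_{i,j} \ne \emptyset$ — noting that a further small generic perturbation of the free coordinates $x_3, \dots, x_{d+2}$ does not destroy planarity of the $P_j$'s, since planarity of a straight-line drawing is an open condition. I expect these technicalities, rather than the main idea, to be the fiddliest part.
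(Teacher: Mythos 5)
Your proof is correct and follows exactly the route the paper intends: the paper gives no explicit proof, noting only that the proposition ``follows from an application of Lemma~\ref{hc-order}'', and your construction (planar part on the $(1,2)$-plane, one fresh axis per path sharing the common coordinate, with Lemma~\ref{hc-order} supplying the free coordinates) is that application, with the genericity/injectivity bookkeeping filled in carefully.
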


\section{Open problems}

\begin{enumerate}
\item What is the plane-projecting dimension of $K_n$?

\item Find tight bounds for the maximum number of edges in a $n$-vertex graph that is plane-projectable in $\mathbb{R}^3$.

\item Is every graph of maximum degree 6 plane-projectable in three dimensions?

\item Is $pdim(G)=O(\sqrt{\Delta(G)})$?

\item Is it true that $pdim(G)$ is at most the smallest $d$ such that $\dbinom{d}{2} \geq \bar{\theta}(G)$?

\end{enumerate}

Things to do:
\begin{enumerate}
    \item Must the union of 2 caterpillar forest graphs always be planar? [No. Counter example $K_{3,3}$]
    \item Can every graph of max degree 6 be decomposed into a planar graph and one or two caterpillar graphs?
    \item 2-degenerate graphs and 3-degenerate graphs?
    \item Smallest graphs which are not 2-plane projectable/3-plane projectable
    \item Closure under minor or topological minor [Or counter-examples]
    \item $\sqrt{2ca(G)}\leq pdim(G) \leq ca(G)$
    Example of a graph with $pdim(G)=k$ and $ca(G)=O(k^2)$
    \item $K_{n,n}$
    
\end{enumerate}

\bibliographystyle{plain}
\bibliography{bib-thickness}

\end{document}